\documentclass[12pt,a4paper]{amsart}
\pdfoutput=1
\usepackage{graphicx}
\usepackage{stmaryrd}
\usepackage{amssymb}
\usepackage{amsmath}
\usepackage{amsthm}
\usepackage{dsfont}
\usepackage{hyperref}
\usepackage{amsfonts}
\usepackage{amsrefs}
\usepackage{epsfig}
\usepackage{latexsym}
\usepackage{cite}

\usepackage{amsthm}
\usepackage{enumerate}
\usepackage{centernot}
\usepackage{fullpage}
\usepackage{mathrsfs}

\usepackage{color}

\newtheorem{theorem}{Theorem}[section]

\theoremstyle{definition}

\numberwithin{equation}{section}

% Marginal notes

\begin{document}
\title{On the density of 2D critical percolation gaskets and anchored clusters}
%\title{Scale Covariance of Bulk-Boundary Connection Probabilities in 2D Critical Percolation / Two density formulas for 2D critical percolation}

\author{Federico Camia}
%{\tiny \sl Division of Science, NYU Abu Dhabi \& Courant Inst.~of Mathematical Sciences, NYU}

\address{Division of Science, NYU Abu Dhabi, Saadiyat Island, Abu Dhabi, UAE \& Courant Institute of Mathematical Sciences, New York University,
	251 Mercer st, New York, NY 10012, USA. \\ ORCiD: 0000-0002-3510-8535}
\email{federico.camia@nyu.edu}

\subjclass[2010]{Primary: 60K35, 82B43, 82B27. Secondary: 82B31, 60J67, 81T27}

\begin{abstract}
	We prove a formula, first obtained by Kleban, Simmons and Ziff using conformal field theory methods, for the (renormalized) density of a critical percolation cluster in the upper half-plane ``anchored'' to a point on the real line. The proof is inspired by the method of images. We also show that more general bulk-boundary connection probabilities have well-defined, scale-covariant scaling limits, and prove a formula for the scaling limit of the (renormalized) density of the critical percolation gasket in any domain conformally equivalent to the unit disk. 
\end{abstract}

\keywords{Critical percolation, connection probabilities, continuum scaling limit, upper half-plane.}

\maketitle

\section{Introduction}

Despite being one of the first models for which the emergence of conformal invariance in the scaling limit was verified rigorously, with Smirnov's proof \cite{Smi01} of Cardy's formula \cite{Cardy92} and the subsequent proofs of convergence of interfaces to SLE$_6$ \cite{Smi01,CN07} and CLE$_6$ \cite{CN06,CN08}, critical percolation does not fit as easily as other models into the conformal field theory (CFT) framework, due to the lack of an obvious local field such as the spin (magnetization) field in the Ising model.

In the physics literature, direct reference to local percolation fields is often avoided by deriving results for the $q$-state Potts model and then extrapolating them to percolation by taking the limit $q \to 1$. This is the case, for example, in \cite{Cardy92} and \cite{DV11}. Working with the Potts model, which has a well-defined magnetization field, makes it possible to use CFT tools that are not directly available for percolation. However, the limit $q \to 1$ hides a remarkable amount of subtlety (see, e.g., \cite{CR13}) and, at the moment, has no rigorous justification.

From a mathematical perspective, the rigorous derivation of conformal covariance for observables that can be interpreted as percolation $n$-point correlation functions in the bulk and for a local percolation field has been achieved only very recently \cite{Camia23}.

Given this situation, from a mathematical physics perspective, it is worthwhile to try to match CFT derivations with rigorous results. This note does that by combining the recent results and methods of \cite{Camia23} with insight and ideas from \cite{KSZ06}.

In \cite{KSZ06}, among other results, P. Kleban, J. J. H. Simmons and R. M. Ziff derive a formula for what they call the density of a critical percolation cluster in the upper half-plane ``anchored'' to the origin. Their derivation uses non-rigorous conformal field theory methods and relies on the interpretation of the desired quantity as the two-point function, in the upper half-plane, of two conformal primary fields. Using a method of images, this two-point function in the upper half-plane can be written as a three-point function on the full plane, which is fixed by conformal covariance up to a multiplicative constant. %For a point $z=x+iy$ with $y>0$, Kleban, Simmons and Ziff calculate that the density at $z$ of a critical cluster in the upper half-plane ``anchored'' to the origin is proportional to $y^{11/48}/|z|^{2/3}$.

In this note, we show how the results and methods of \cite{Camia23} can be used to obtain the formula of Kleban, Simmons and Ziff rigorously as the scaling limit of the probability, appropriately rescaled, of the event that a critical percolation cluster restricted to the upper half-plane and touching the origin of the plane contains a point $z=x+iy$ with $y>0$. The proof of the formula proceeds by considering two ``specular'' copies of the same event, one in the upper half-plane and one in the lower half-plane, and is reminiscent of the method of images used in electrostatics and field theory.

%The probability of the resulting full-plane event is shown to have a conformally covariant scaling limit,
%which generates an event on the full plane whose probability is determined by conformal invariance in the scaling limit (up to a multiplicative constant).

The results and methods of \cite{Camia23} also allow us to show that other bulk-boundary connection probabilities have well-defined, scale-covariant scaling limits, and to derive a formula for the scaling limit of the probability, appropriately rescaled, of the event that a point in a domain belongs to the critical percolation gasket of that domain. (The percolation gasket of a domain with, say, open boundary condition is the set of all open vertices in the domain that belong to the open cluster touching the boundary).

%To keep the paper short, we try to avoid replicating arguments that have appeared elsewhere (in particular, in \cite{Camia23}), giving instead precise references to \cite{Camia23} and a sufficient amount of details to enable the reader to follow the proofs.

\section{Scaling limit of gaskets and anchored clusters}

We consider critical site percolation on $a\mathbb{T}$, the triangular lattice $\mathbb{T}$ scaled by a factor $a>0$. We embed $\mathbb{T}$ in $\mathbb{R}^2$ as in Figure~\ref{fig-tri-hex} and in such a way that one of its vertices coincides with the origin of $\mathbb{R}^2$. We denote this vertex by $0$ and call it the \emph{origin} of $\mathbb{T}$. With $z^a$ we denote a vertex of $a\mathbb{T}$ as well as, with a slight abuse of notation, the elementary hexagon of the dual hexagonal lattice $a\mathbb{H}$ of which $z^a$ is the center (see Figure \ref{fig-tri-hex}).
%Each vertex of $a\mathbb{T}$ is identified with the elementary cell of $a\mathbb{H}$ of which it is the center, where $\mathb{H}$ is the hexagonal lattice dual to $\mathb{T}$ (see Figure \ref{fig-tri-hex}).
Each vertex of $a\mathbb{T}$ (or hexagonal cell of $a\mathbb{H}$) is declared \emph{open} or \emph{closed} with equal probability, independently of all other vertices. We let $\mathbb{P}^a$ denote the corresponding probability measure on configurations of open and closed vertices.
%With probability one, all open and closed clusters (maximal connected components of open and closed vertices, respectively) are finite, so the boundaries between open and closed clusters can be represented as loops drawn using edges of the dual hexagonal lattice. We call these loops (percolation) \emph{interfaces}.

\begin{figure}[!ht]
	\begin{center}
		\includegraphics[width=5cm]{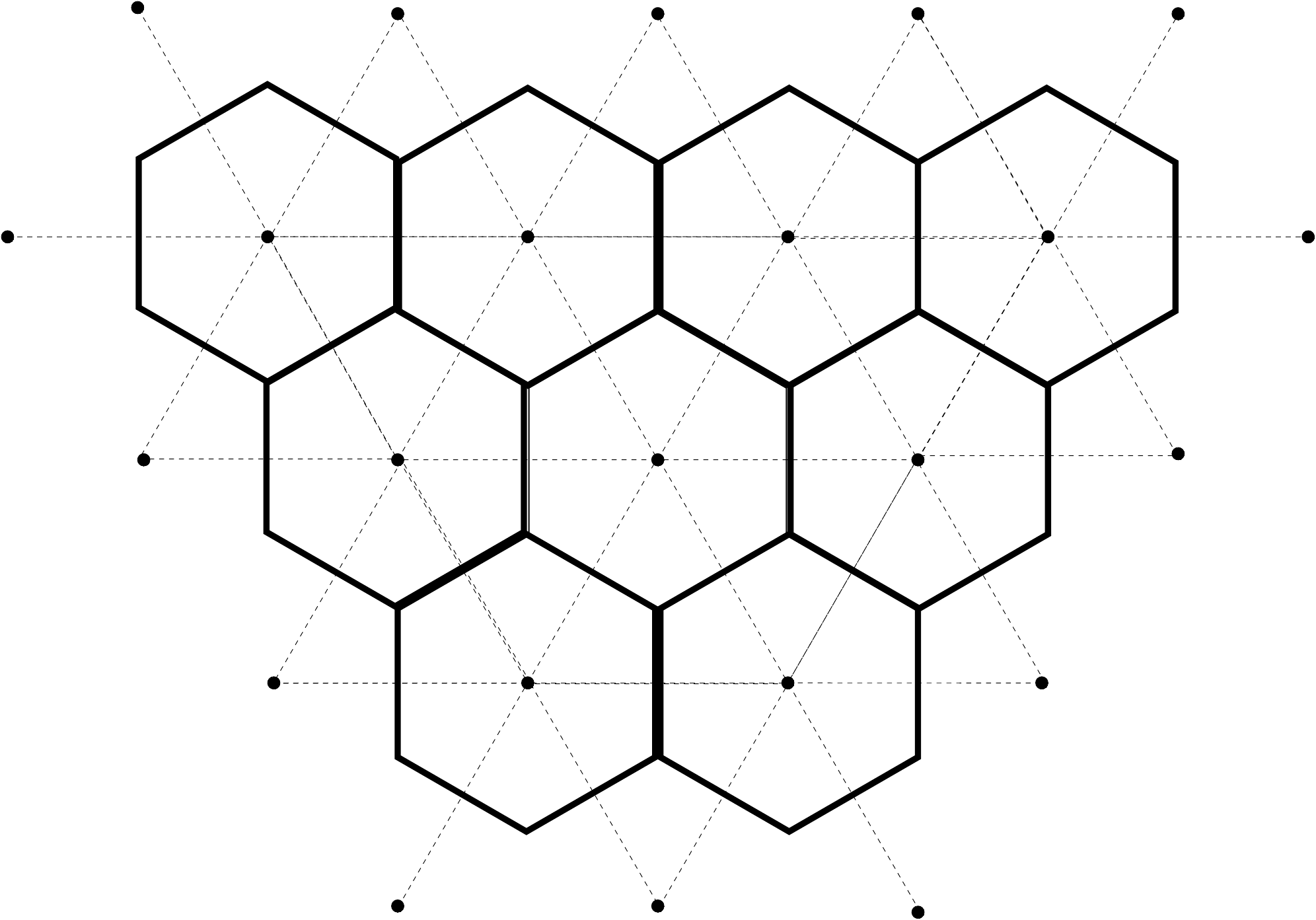}
		\caption{Embedding of the triangular and hexagonal lattices in ${\mathbb R}^2$.}
		\label{fig-tri-hex}
	\end{center}
\end{figure}

%Consider critical site percolation on the triangular lattice in the upper half-plane $\mathbb{H}$ with lattice spacing $a$, and let $\mathbb{P}^a$ denote the corresponding probability measure. With $z^a$ we denote the center of an elementary hexagon of the dual hexagonal lattice as well as, with a slight abuse of notation, the hexagon itself. %We use $x^a$ if the hexagon is on the boundary, that is, if it intersects the real line.

Let $\pi_a$ denote the probability of a \emph{bulk one-arm event}, that is, the probability that a vertex $z^a$ belongs to an open cluster that is not fully contained in the disk $B_1(z^a)$ of radius $1$ centered at $z^a$. Analogously, let $\overline\pi_a$ denote the probability of a \emph{boundary one-arm event}, that is, the probability that $0$ belongs to an open cluster whose restriction to the upper half-plane is not fully contained in the disk $B_1(0)$ of radius $1$ centered at $0$.

We write $\pi_a(\varepsilon) = \mathbb{P}^a\big( z^a \longleftrightarrow \partial B_{\varepsilon}(z^a) \big)$, where $z^a \longleftrightarrow \partial B_{\varepsilon}(z^a)$ denotes the event that $z^a$ belongs to an open cluster that is not fully contained in the disk $B_{\varepsilon}(z^a)$ of radius $\varepsilon$ centered at $z^a$. More generally, for a planar domain $D$ containing $z^a$, we let $z^a \longleftrightarrow \partial D$ denote the event that $z^a$ belongs to an open cluster that is not fully contained inside $D$.

Note that $\pi_a=\pi_a(1)$. Moreover, one can prove \cite{GPS13} that, for any $\varepsilon>0$,
\begin{align} \label{eq:scaling-bulk}
\lim_{a \to 0} \pi_a^{-1} \pi_a(\varepsilon) = \lim_{a \to 0} \pi_a^{-1} \mathbb{P}^a\big(z^a \longleftrightarrow \partial B_{\varepsilon}(z^a) \big) = \varepsilon^{-5/48}
\end{align}
and
\begin{align} \label{eq:scaling-boundary}
\lim_{a \to 0} \overline{\pi}_a^{-1} \overline{\pi}_a(\varepsilon) = \varepsilon^{-1/3}.
\end{align}

In the rest of the paper, it is convenient to identify $\mathbb{R}^2$ with the complex plane $\mathbb{C}$.
Our first result concerns the density $g_D$ of the percolation ``gasket'' in a planar domain $D$ (an open subset of $\mathbb{C}$), that is, the scaling limit of the probability that a vertex contained in $D$ belongs to an open cluster that reaches the boundary of $D$.
\begin{theorem} \label{thm:gasket-density}
	Let $D$ be a domain conformally equivalent to the upper half-plane. For any $z \in D$, let $ z^a \in a\mathbb{T}$ be chosen in such a way that $z^a \to z$, as $a \to 0$. Then there exists a constant $C_g \in (0,\infty)$ such that
	\begin{align} \label{eq:gasket-density}
	& g_D(z) := \lim_{a \to 0} \pi_a^{-1} \mathbb{P}^a(z^a \longleftrightarrow \partial D) = C_g \, \text{\emph{rad}}(z,D)^{-5/48},
	\end{align}
	where $\text{\emph{rad}}(z,D)$ denotes the conformal radius of $D$ from $z$.
\end{theorem}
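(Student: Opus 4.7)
The plan is to combine the bulk one-arm scaling \eqref{eq:scaling-bulk} with conformal covariance of the critical percolation scaling limit from \cite{Camia23}, reducing the computation to a reference domain. Since $D$ is conformally equivalent to the upper half-plane, hence to the unit disk $\mathbb{D}$, the Riemann mapping theorem provides a conformal bijection $\phi:\mathbb{D}\to D$ with $\phi(0)=z$ and $|\phi'(0)|=\text{rad}(z,D)$.

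The core task is to establish, for any simply connected domain $\Omega$ and any $w\in\Omega$, existence of the limit $g_\Omega(w)$ together with the conformal covariance identity
\begin{equation}\label{eq:cov-plan}
g_{\Omega'}(\psi(w))=|\psi'(w)|^{-5/48}\,g_\Omega(w)\qquad\text{for every conformal }\psi:\Omega\to\Omega'.
\end{equation}
To this end, I would fix $\delta>0$ small enough that $\overline{B_\delta(w)}\subset\Omega$ and factor
\[
\pi_a^{-1}\mathbb{P}^a\bigl(w^a\longleftrightarrow\partial\Omega\bigr)=\bigl(\pi_a^{-1}\pi_a(\delta)\bigr)\cdot\mathbb{P}^a\bigl(w^a\longleftrightarrow\partial\Omega\bigm|w^a\longleftrightarrow\partial B_\delta(w)\bigr).
\]
By \eqref{eq:scaling-bulk}, the first factor converges to $\delta^{-5/48}$. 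The conditional factor involves only macroscopic connections, so by the scaling-limit framework of \cite{Camia23} it converges to a conformally invariant quantity $q_\Omega(w,B_\delta(w))$. Quasi-multiplicativity and RSW then allow one to pass to the limit $\delta\to 0$, producing $g_\Omega(w)=\lim_{\delta\to 0}\delta^{-5/48}\,q_\Omega(w,B_\delta(w))$.

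The covariance \eqref{eq:cov-plan} follows by comparing this construction in $\Omega$ and in $\Omega'=\psi(\Omega)$. Since $\psi(B_\delta(w))\approx B_{|\psi'(w)|\delta}(\psi(w))$ for small $\delta$, conformal invariance of the scaling limit identifies $q_\Omega(w,B_\delta(w))$ with $q_{\Omega'}(\psi(w),B_{|\psi'(w)|\delta}(\psi(w)))$, while the shift in the one-arm scale produces, via \eqref{eq:scaling-bulk}, an extra multiplicative factor $|\psi'(w)|^{-5/48}$. Specializing \eqref{eq:cov-plan} to $\psi=\phi^{-1}:D\to\mathbb{D}$ at $w=z$ yields $g_\mathbb{D}(0)=|\phi'(0)|^{5/48}g_D(z)$, hence
\[
g_D(z)=\text{rad}(z,D)^{-5/48}\,g_\mathbb{D}(0),
\]
so \eqref{eq:gasket-density} holds with $C_g:=g_\mathbb{D}(0)$. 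Positivity of $C_g$ follows from standard RSW lower bounds on one-arm probabilities, and finiteness from the identity $\mathbb{D}=B_1(0)$ combined with \eqref{eq:scaling-bulk}.

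The main obstacle is the rigorous justification of \eqref{eq:cov-plan}, since the discrete lattice $a\mathbb{T}$ does not itself respect conformal maps. The continuum framework of \cite{Camia23} supplies the conformal invariance of connection events in the scaling limit, but extending it from the bulk $n$-point setting treated there to events involving the boundary of a generic simply connected domain may require additional uniform RSW-type estimates near $\partial\Omega$ to control boundary effects as $a\to 0$.
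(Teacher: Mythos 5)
Your proposal is correct and follows essentially the same route as the paper: factor out the one-arm probability at a small scale $\varepsilon$ via \eqref{eq:scaling-bulk}, invoke the arguments of \cite{Camia23} for the existence and conformal invariance of the limiting conditional probability, derive the covariance $g_{D'}(\phi(z)) = |\phi'(z)|^{-5/48} g_D(z)$ by comparing $\phi(B_\varepsilon(z))$ with a disk of radius $|\phi'(z)|\varepsilon$ (the paper makes this precise with an annulus-sandwich estimate), and normalize on the unit disk. The obstacle you flag at the end concerns only the existence step, which the paper likewise delegates to the methods of \cite{Camia23}.
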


\begin{proof}%[Proof sketch]
	By standard RSW arguments (see, e.g., the proofs of Lemmas 2.1 and 2.2 of \cite{CN09}), there are constants $0<K_1<K_2<\infty$, independent of $a$, such that
	\begin{align}
	K_1 \pi_a \leq \mathbb{P}^a(z^a \longleftrightarrow \partial D) \leq K_2 \pi_a,
	\end{align}
	which shows that $\pi_a^{-1} \mathbb{P}^a(z^a \longleftrightarrow \partial D)$ stays bounded away from zero and infinity as $a \to 0$.
	
	Take $\varepsilon>0$ such that $z$ is at distance greater than $\varepsilon$ from $\partial D$. Then, for all $a$ sufficiently small, $z^a$ is at distance greater than $\varepsilon$ from $\partial D$ and the event $z^a \longleftrightarrow \partial D$ implies the events $z^a \longleftrightarrow \partial B_{\varepsilon}(z^a)$. Therefore, using \eqref{eq:scaling-bulk}, we have
	\begin{align} \label{eq:conditional-probability}
	\begin{split}
	& \lim_{a \to 0} \pi_a^{-1} \mathbb{P}^a\big(z^a \longleftrightarrow \partial D \big) \\
	& \quad = \lim_{a \to 0} \pi_a^{-1} \mathbb{P}^a\big(z^a \longleftrightarrow \partial D \vert z^a \longleftrightarrow \partial B_{\varepsilon}(z^a) \big) \mathbb{P}^a\big(z^a \longleftrightarrow \partial B_{\varepsilon}(z^a) \big) \\
	& \quad = \varepsilon^{-5/48} \lim_{a \to 0} \mathbb{P}^a\big(z^a \longleftrightarrow \partial D \vert z^a \longleftrightarrow \partial B_{\varepsilon}(z^a) \big).
	\end{split}
	\end{align}
	
	The last conditional probability is similar to that appearing in the first line of equation (2.23) of \cite{Camia23}, it is the probability of a similar connectivity event and the conditioning is of the same type. This implies that the arguments used to show the existence of the limit in the first line of (2.23) of \cite{Camia23} can be used to show the existence of the limit of the conditional probability in the last line of \eqref{eq:conditional-probability}, so %Therefore, letting $\mathbb{P}_D$ denote the distribution of the full scaling limit of percolation in $D$ (i.e., of nested CLE$_6$ in $D$),
	we can define
	\begin{align}
	\begin{split}
	& \mathbb{P}\big(z \longleftrightarrow \partial D \vert z \longleftrightarrow \partial B_{\varepsilon}(z) \big) := \lim_{a \to 0} \mathbb{P}^a\big(z^a \longleftrightarrow \partial D \vert z^a \longleftrightarrow \partial B_{\varepsilon}(z^a) \big)
	\end{split}
	\end{align}
	and
	\begin{align}
	\begin{split} \label{def:g_D}
	& g_{D}(z) := \lim_{a \to 0} \pi_a^{-1} \mathbb{P}^a\big(z^a \longleftrightarrow \partial D \big) = \varepsilon^{-5/48} \mathbb{P}\big(z \longleftrightarrow \partial D \vert z \longleftrightarrow \partial B_{\varepsilon}(z) \big),
	\end{split}
	\end{align}
	where the last equality is valid for all $\varepsilon>0$ sufficiently small.
	
	The rest of the proof proceeds like the proof of Theorem 1.4 of \cite{Camia23}; we provide a sketch of the argument for completeness. The conformal invariance properties of the scaling limit of percolation \cite{CN06,CN08} imply that, for any domain $D'$ conformally equivalent to $D$, given a conformal map $\phi:D \to D'$, for all $\varepsilon>0$ sufficiently small, 
	\begin{align}
	\begin{split} \label{eq:g_Dprime}
	& g_{D'}(\phi(z)) = \varepsilon^{-5/48} \mathbb{P}\big(\phi(z) \longleftrightarrow \partial D' \vert \phi(z) \longleftrightarrow \partial B_{\varepsilon}(\phi(z)) \big) \\
	& \qquad \qquad = \varepsilon^{-5/48} \mathbb{P}\big(z \longleftrightarrow \partial D \vert z \longleftrightarrow \phi^{-1}\big(\partial B_{\varepsilon}(\phi(z))\big) \big),
	\end{split}
	\end{align}
	where
	\begin{align}
	\begin{split}
	\mathbb{P}\big(z \longleftrightarrow \partial D \vert z \longleftrightarrow \phi^{-1}\big(\partial B_{\varepsilon}(\phi(z))\big) \big) = \lim_{a \to 0} \mathbb{P}^a\big(z^a \longleftrightarrow \partial D \vert \phi^{-1}\big(\partial B_{\varepsilon}(\phi(z))\big) \big).
	\end{split}
	\end{align}
	
	Now let $s = \phi'(z)$ and let $A_{r,R}(z) = B_{R}(z) \setminus B_{r}(z)$ denote the thinnest annulus centered at $z$ containing the symmetric difference of $\phi^{-1}(B_{\varepsilon}(\phi(z)))$ and $B_{\varepsilon/s}(z)$. Since $\phi^{-1}$ is analytic and $(\phi^{-1})'(\phi(z))=1/s$, for every $w \in \partial B_{\varepsilon}(\phi(z))$, $|z-\phi^{-1}(w)| = \varepsilon/s + O(\varepsilon^2)$, which implies that
	\begin{align} \label{eq:limits}
	\lim_{\varepsilon \to 0} \frac{r}{\varepsilon} = \lim_{\varepsilon \to 0} \frac{R}{\varepsilon} = \frac{1}{s}.
	\end{align}
	
	Using \eqref{eq:g_Dprime} and an analog of equation (2.32) of \cite{Camia23}, we have that
	\begin{align}
	\begin{split}
	& \Big(\frac{\varepsilon}{r}\Big)^{-5/48} r^{-5/48} \, \mathbb{P}\big(z \longleftrightarrow \partial D \, \vert \, z \longleftrightarrow \partial B_{r}(z) \big) \\
	& \quad \leq \varepsilon^{-5/48} \mathbb{P}\big(z \longleftrightarrow \partial D \, \vert \, z \longleftrightarrow \partial \phi^{-1}\big(\partial B_{\varepsilon}(z)\big) \big) \\
	& \qquad \leq \Big(\frac{\varepsilon}{R}\Big)^{-5/48} R^{-5/48} \, \mathbb{P}\big(z \longleftrightarrow \partial D \, \vert \, z \longleftrightarrow \partial B_{R}(z) \big).
	\end{split}
	\end{align}
	Combining this with \eqref{def:g_D} and \eqref{eq:g_Dprime} and taking $\varepsilon$ (and hence $r$ and $R$) sufficiently small, we obtain
	\begin{align}
	\begin{split}
	\Big(\frac{\varepsilon}{r}\Big)^{-5/48} \, g_D(z) \leq g_{D'}(\phi(z)) \leq \Big(\frac{\varepsilon}{R}\Big)^{-5/48} \, g_D(z).
	\end{split}
	\end{align}
	Sending $\varepsilon \to 0$ and using \eqref{eq:limits} gives
	\begin{align} \label{eq:conf-cov-g}
	g_{D'}(\phi(z)) = s^{-5/48} g_D(z).
	\end{align} 
	To conclude the proof, apply \eqref{eq:conf-cov-g} to the unit disk, $D'=\mathbb{D}$, choose $\phi: D \to \mathbb{D}$ to be a conformal map such that $\phi(z)=0$, and let $C_g=g_{\mathbb{D}}(0)$.
\end{proof}

The next theorem is the main result of this paper, it provides a rigorous derivation of a formula, first obtained by Kleban, Simmons and Ziff \cite{KSZ06} using conformal field theory methods, for the (renormalized) density of a critical percolation cluster in the upper half-plane ``anchored'' to a point on the real line. We will approximate the upper half-plane $\mathbb{H}$ by the discrete domain depicted in Fig.~\ref{fig-H-lattice}, with origin $0$, which we will refer to as the \emph{discrete upper half-plane}. (In the proof of the theorem, we will also use a similar discrete approximation of the lower half-plane, with origin $0'$, as in Fig.~\ref{fig-H-lattice}.) For any hexagon $z^a$ in the discrete upper half-plane, we will write $z^a \stackrel{\mathbb{H}}{\longleftrightarrow} 0$ to denote the event that $z^a$ and $0$ belong to the same open cluster and, moreover, there is an open path (i.e., a sequence of open, nearest-neighbor hexagons) from $z^a$ to $0$ contained in the discrete upper half-plane. Similarly, $0 \stackrel{\mathbb{H}}{\longleftrightarrow} \partial B_{\varepsilon}(0)$ will denote the event that there is an open path in the discrete upper half-plane from $0$ to the (upper half of the) boundary of the disk $B_{\varepsilon}(0)$.

\begin{figure}[!ht]
	\begin{center}
		\includegraphics[width=5cm]{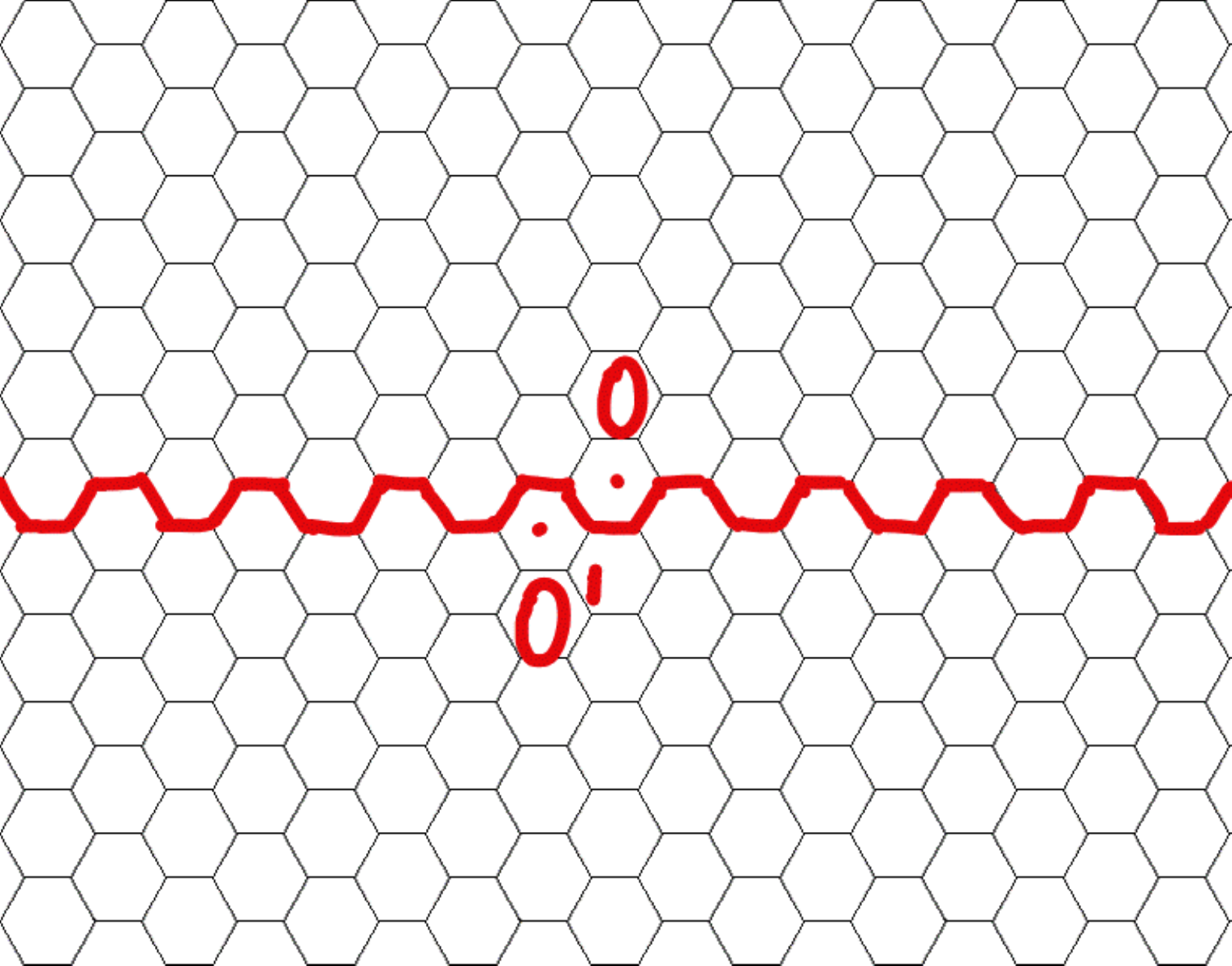}
		\caption{Discrete approximations of the upper and lower half-planes with the respective origins, $O$ and $O'$.}
		\label{fig-H-lattice}
	\end{center}
\end{figure}

\begin{theorem} \label{thm:KSZ-formula}
Consider critical site percolation on the triangular lattice. Given $z=x+iy=r e^{i \theta} \in \mathbb{C}$ with $y>0$, for any sequence of vertices $z^a \in a\mathbb{T}$ in the discrete upper half-plane chosen so that $z^a \to z$ as $a \to 0$, there is a constant $C_{\mathbb{H}} \in (0,\infty)$ such that
\begin{align} \label{eq:KSZ}
\lim_{a \to 0} \pi_a^{-1} \overline{\pi}_a^{-1} \mathbb{P}^a\big(z^a \stackrel{\mathbb{H}}{\longleftrightarrow} 0\big) = C_{\mathbb{H}} \, \frac{y^{11/48}}{\vert z \vert^{2/3}} = C_{\mathbb{H}} \, \frac{(\sin\theta)^{11/48}}{r^{7/16}}.
\end{align}
\end{theorem}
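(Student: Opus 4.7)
My plan has three ingredients: (a) establish the existence of the rescaled limit
\begin{align*}
\rho_\mathbb{H}(z) \,:=\, \lim_{a\to 0}\pi_a^{-1}\overline\pi_a^{-1}\,\mathbb{P}^a\bigl(z^a\stackrel{\mathbb{H}}{\longleftrightarrow}0\bigr)
\end{align*}
by the same conditioning argument as in Theorem~\ref{thm:gasket-density}; (b) derive scale covariance for $\rho_\mathbb{H}$, which reduces the formula to identifying a single angular function $f(\theta)$; and (c) use the method of images --- the ``specular copies'' idea alluded to in the introduction --- to pin down $f(\theta)$ via a full-plane three-point function.

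For (a), standard RSW bounds keep $\pi_a^{-1}\overline\pi_a^{-1}\mathbb{P}^a(z^a\stackrel{\mathbb{H}}{\longleftrightarrow}0)$ bounded away from $0$ and $\infty$. Conditioning, for small $\varepsilon<|z|/2$, on the pair of one-arm events $\{z^a\longleftrightarrow\partial B_\varepsilon(z^a)\}$ and $\{0\stackrel{\mathbb{H}}{\longleftrightarrow}\partial B_\varepsilon(0)\}$ --- which depend on disjoint sets of hexagons and are hence independent --- and invoking the scaling-limit existence arguments of \cite{Camia23} as in \eqref{eq:conditional-probability}, together with the asymptotics \eqref{eq:scaling-bulk}--\eqref{eq:scaling-boundary}, yields the existence of $\rho_\mathbb{H}(z)$. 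For (b), replaying the rescaling computation in the proof of Theorem~\ref{thm:gasket-density} with the dilation $\phi(w)=\lambda w$ of $\mathbb{H}$, which has derivative $\lambda$ both at $z$ and at the boundary anchor $0$, gives
\begin{align*}
\rho_\mathbb{H}(\lambda z) = \lambda^{-(5/48+1/3)}\,\rho_\mathbb{H}(z) = \lambda^{-7/16}\,\rho_\mathbb{H}(z),\qquad \lambda>0,
\end{align*}
so that $\rho_\mathbb{H}(re^{i\theta}) = r^{-7/16}f(\theta)$ for some positive function $f$ on $(0,\pi)$; it remains to show $f(\theta)=C_\mathbb{H}(\sin\theta)^{11/48}$.

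For (c), let $E_+ = \{z^a\stackrel{\mathbb{H}}{\longleftrightarrow}0\}$ and let $E_-$ be its mirror image in the discrete lower half-plane: an open path from $\bar z^a$ to $0'$ contained in the discrete lower half-plane. Since the two discrete half-planes share no hexagons, $E_+$ and $E_-$ are independent, and reflection symmetry gives $\mathbb{P}^a(E_-)=\mathbb{P}^a(E_+)$, so
\begin{align*}
\rho_\mathbb{H}(z)^2 \;=\; \lim_{a\to 0}(\pi_a\overline\pi_a)^{-2}\,\mathbb{P}^a(E_+\cap E_-).
\end{align*}
Viewed as a full-plane event with three marks ($z^a$ and $\bar z^a$ in the bulk and the origin on $\mathbb{R}$), $E_+\cap E_-$ has a rescaled limit that the methods of \cite{Camia23} identify with the value at $(z,\bar z,0)$ of a conformally covariant full-plane three-point function $G(z_1,z_2,z_3)$: the two bulk marks each carry scaling dimension $\Delta_b=5/48$, while the origin mark carries the doubled boundary dimension $\Delta_o=2\cdot 1/3=2/3$, arising from the two independent boundary arm events at $0$ and $0'$. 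The standard uniqueness of scalar three-point functions under full-plane conformal invariance then forces
\begin{align*}
G(z_1,z_2,z_3)\;=\;C'\,\frac{|z_1-z_2|^{\,\Delta_o-2\Delta_b}}{|z_1-z_3|^{\,\Delta_o}\,|z_2-z_3|^{\,\Delta_o}}\;=\;C'\,\frac{|z_1-z_2|^{11/24}}{|z_1-z_3|^{2/3}|z_2-z_3|^{2/3}},
\end{align*}
and evaluating at $(z,\bar z,0)$ gives $G(z,\bar z,0)=C'(2y)^{11/24}/|z|^{4/3}$. Taking square roots yields $\rho_\mathbb{H}(z)=C_\mathbb{H}\,y^{11/48}/|z|^{2/3}$ with $C_\mathbb{H}=2^{11/48}\sqrt{C'}\in(0,\infty)$, as claimed.

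The main obstacle is the identification step in (c): rigorously promoting $\lim_a(\pi_a\overline\pi_a)^{-2}\mathbb{P}^a(E_+\cap E_-)$ to the value of a conformally covariant full-plane three-point function with the stated scaling dimensions --- and in particular verifying that the effective dimension at the origin is $\Delta_o=2\cdot 1/3=2/3$ rather than some other arm-exponent arising from the joint constraint. This requires carefully extending the existence and conformal-covariance machinery of \cite{Camia23} to the bulk--bulk--origin configuration, exploiting that the ``double boundary arm'' structure at the origin factorizes as the product of two independent boundary arm events on opposite sides of $\mathbb{R}$. Once this identification is in hand, the remainder of the argument is dictated by conformal invariance and reduces to algebra.
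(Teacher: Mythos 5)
Your proposal follows essentially the same route as the paper's proof: existence of $\rho_{\mathbb{H}}$ via conditioning on the two independent one-arm events, scale covariance with exponent $7/16$, and then the method of images --- squaring $\rho_{\mathbb{H}}$ against the independent specular event in the lower half-plane and identifying the limit with a full-plane three-point function carrying dimensions $5/48$, $5/48$ at $z,\bar z$ and the doubled boundary dimension $2/3$ at the origin --- exactly as in the paper, with matching exponent arithmetic and the same constant $C_{\mathbb{H}}=2^{11/48}\sqrt{C'}$. The obstacle you flag (rigorously establishing the conformal covariance of the bulk--bulk--origin limit) is precisely the step the paper handles by invoking the arguments of Theorem 1.1 of \cite{Camia23}.
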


\begin{proof}%[Proof sketch]
By standard RSW arguments (see, e.g., the proofs of Lemmas 2.1 and 2.2 of \cite{CN09}), there are constants $0<K_1<K_2<\infty$, independent of $a$, such that
\begin{align}
K_1 \pi_a \overline{\pi}_a \leq \mathbb{P}^a\big(z^a \stackrel{\mathbb{H}}{\longleftrightarrow} 0\big) \leq K_2 \pi_a \overline{\pi}_a,
\end{align}
which shows that $\pi_a^{-1} \overline{\pi}_a^{-1} \mathbb{P}^a\big(z^a \stackrel{\mathbb{H}}{\longleftrightarrow} 0\big)$ stays bounded away from zero and infinity as $a \to 0$.

Take $\varepsilon>0$ such that $z$ is at distance greater than $2\varepsilon$ from the real line. Then, for all $a$ sufficiently small, $z^a$ is at distance greater than $2\varepsilon$ from the real line and the event $z^a \stackrel{\mathbb{H}}{\longleftrightarrow} 0$ implies the independent events $z^a \longleftrightarrow \partial B_{\varepsilon}(z^a)$ and $0 \stackrel{\mathbb{H}}{\longleftrightarrow} \partial B_{\varepsilon}(0)$. Therefore, using \eqref{eq:scaling-bulk} and \eqref{eq:scaling-boundary}, we have
\begin{align} \label{eq:conditional-prob}
\begin{split}
& \lim_{a \to 0} \pi_a^{-1} \overline{\pi}_a^{-1} \mathbb{P}^a\big(z^a \stackrel{\mathbb{H}}{\longleftrightarrow} 0\big) \\
& \quad = \lim_{a \to 0} \pi_a^{-1} \overline{\pi}_a^{-1} \mathbb{P}^a\big(z^a \stackrel{\mathbb{H}}{\longleftrightarrow} 0 \vert \{z^a \longleftrightarrow \partial B_{\varepsilon}(z^a)\} \cap \{x^a \stackrel{\mathbb{H}}{\longleftrightarrow} \partial B_{\varepsilon}(0)\} \big) \\
& \qquad \quad \mathbb{P}^a\big(z^a \longleftrightarrow \partial B_{\varepsilon}(z^a) \big) \mathbb{P}^a\big(0 \stackrel{\mathbb{H}}{\longleftrightarrow} \partial B_{\varepsilon}(0) \big) \\
& \quad = \varepsilon^{-5/48-1/3} \lim_{a \to 0} \mathbb{P}^a\big(z^a \stackrel{\mathbb{H}}{\longleftrightarrow} 0 \vert \{z^a \longleftrightarrow \partial B_{\varepsilon}(z^a)\} \cap \{0 \stackrel{\mathbb{H}}{\longleftrightarrow} \partial B_{\varepsilon}(0)\} \big).
\end{split}
\end{align}

The last conditional probability is similar to that appearing in the first line of equation (2.23) of \cite{Camia23}, it is the probability of a similar connectivity event and the conditioning is of the same type. This implies that the arguments used to show the existence of the limit in the first line of (2.23) of \cite{Camia23} can be used to show the existence of the limit of the conditional probability in the last line of \eqref{eq:conditional-prob}, so we can define
\begin{align}
\begin{split}
& \mathbb{P}\big(z \stackrel{\mathbb{H}}{\longleftrightarrow} 0 \vert \{z \longleftrightarrow \partial B_{\varepsilon}(z)\} \cap \{0 \stackrel{\mathbb{H}}{\longleftrightarrow} \partial B_{\varepsilon}(0)\} \big) \\
& \quad := \lim_{a \to 0} \mathbb{P}^a\big(z^a \stackrel{\mathbb{H}}{\longleftrightarrow} 0 \vert \{z^a \longleftrightarrow \partial B_{\varepsilon}(z^a)\} \cap \{0 \stackrel{\mathbb{H}}{\longleftrightarrow} \partial B_{\varepsilon}(0)\} \big)
\end{split}
\end{align}
and
\begin{align}
\begin{split}
& \rho_{\mathbb{H}}(z) %\equiv \mathbb{P}\big( z \stackrel{\mathbb{H}}{\longleftrightarrow} 0 \big) \\
:= \varepsilon^{-7/16} \mathbb{P}\big(z \stackrel{\mathbb{H}}{\longleftrightarrow} 0 \vert \{z \longleftrightarrow \partial B_{\varepsilon}(z)\} \cap \{0 \stackrel{\mathbb{H}}{\longleftrightarrow} \partial B_{\varepsilon}(0)\} \big),
\end{split}
\end{align}
valid for all $\varepsilon>0$ sufficiently small.

Now, keeping in mind Fig.~\ref{fig-H-lattice}, consider the event $\mathcal{A}^a$ that $z^a$ is connected to $0$ by an open path in the discrete upper half-plane (i.e. $z^a \stackrel{\mathbb{H}}{\longleftrightarrow} 0$) \emph{and} $\hat{z}^a$ is connected to $0'$ by an open path in the discrete \emph{lower} half-plane, where $\hat{z}^a$ is the image of $z^a$ under a reflection and a translation that map the discrete upper half-plane to the discrete lower half-plane, mapping $0$ to $0'$.
$\mathcal{A}^a$ can be written as the intersection $\mathcal{A}^a=\mathcal{A}^a_u \cap \mathcal{A}^a_l$ of two \emph{independent} events, $\mathcal{A}^a_u=z^a \stackrel{\mathbb{H}}{\longleftrightarrow} 0$ and a corresponding event $\mathcal{A}^a_l$ in the discrete lower half-plane, so that, by reflection and translation invariance,
\begin{align} \label{eq:equivalence1}
\mathbb{P}^a(\mathcal{A}^a) = \Big(\mathbb{P}^a\big(z^a \stackrel{\mathbb{H}}{\longleftrightarrow} 0\big)\Big)^2.
\end{align}

If we write
\begin{align} \label{eq:limit}
f(0,z,\bar{z}) := \lim_{a \to 0} \pi_a^{-2} \overline{\pi}_a^{-2} \mathbb{P}^a(\mathcal{A}^a),
\end{align}
since $\mathcal{A}^a$ is a connectivity event of the same type as those considered in Theorem 1.1 of \cite{Camia23}, the arguments in the proof of Theorem 1.1 of \cite{Camia23} can be used to show that the limit in \eqref{eq:limit} exists in $(0,\infty)$.

We can define $f(z_0,z_1,z_2)$ for any triple of points $z_0,z_1,z_2$ that form an isosceles triangle, with $z_0$ denoting the vertex opposite the base of the triangle, using \eqref{eq:limit} in the obvious way, i.e., considering the two half-planes bounded by the line through $z_0$ perpendicular to the base of the triangle and placing the origin at $z_0$.
%Note that the definition of the event $\mathcal{A}^a$ will need to be modified slightly because the embedding of the hexagonal lattice with respect to the two new half-planes is not the same as in Fig.~\ref{fig-H-lattice}, but this does not affect the limit because of the translation and rotation invariance of the scaling limit of critical percolation \cite{CN06,CN08}.
It is clear, then, that $f$ is invariant under translations and rotations.
%The conformal invariance of the scaling limit of critical percolation \cite{CN06,CN08,GMQ21} and the definition of $f$ immediately imply that $f$ is invariant under translations and rotations.
%and that the function $f$ is such that, for any non-singular M\"obius transformation $M$,
%\begin{align} \label{eq:conf-cov-f}
%f(M(0),M(z),M(\bar{z})) = \big(M'(0)\big)^{-4/3} \big(M'(z)\big)^{-5/48} \big(M'(\bar{z})\big)^{-5/48} f(0,z,\bar{z}).
%\end{align}

We now show that $f$ scales covariantly under dilations.
Firstly, note that \eqref{eq:equivalence1} implies
\begin{align} \label{eq:equivalence2}
f(0,z,\bar{z}) = \big(\rho_{\mathbb{H}}(z)\big)^2.
\end{align}
%The proof of \eqref{eq:conf-cov-f} involves showing conformal covariance, in the form of \eqref{eq:conf-cov-f}, under rotations, translations, scale transformations and inversions, and uses the conformal invariance of the scaling limit of critical percolation (namely, of CLE$_6$) under those transformations \cite{CN06,CN08,GMQ21}. The proof is the same as for the connection probabilities considered in Theorem 1.1 of \cite{Camia23}; as an illustration, we show how to obtain conformal covariance under scale transformations. 
Next, consider $s>0$ and take $\varepsilon>0$ such that both $z$ and $sz$ are at distance greater than $2\varepsilon$ from the real line. Then we have
\begin{align}
\begin{split}
& \rho_{\mathbb{H}}(sz) = \lim_{a \to 0} \pi_a^{-1} \overline{\pi}_a^{-1} \mathbb{P}^a\big(sz^a \longleftrightarrow 0\big) \\
& \quad = \varepsilon^{-7/16} \mathbb{P}\big(sz \stackrel{\mathbb{H}}{\longleftrightarrow} 0 \vert \{sz \longleftrightarrow \partial B_{\varepsilon}(sz)\} \cap \{0 \stackrel{\mathbb{H}}{\longleftrightarrow} \partial B_{\varepsilon}(0)\} \big) \\
& \quad = \varepsilon^{-7/16} \mathbb{P}\big(z \stackrel{\mathbb{H}}{\longleftrightarrow} 0 \vert \{z \longleftrightarrow \partial B_{\varepsilon/s}(z)\} \cap \{0 \stackrel{\mathbb{H}}{\longleftrightarrow} \partial B_{\varepsilon/s}(0)\} \big),
\end{split}
\end{align}
where the last equality follows from the scale invariance of the scaling limit of critical percolation \cite{CN06,CN08}.
This implies that
\begin{align}
\begin{split} \label{eq:scale-invariance}
& \rho_{\mathbb{H}}(sz) = s^{-7/16} (\varepsilon/s)^{-7/16} \mathbb{P}\big(z \stackrel{\mathbb{H}}{\longleftrightarrow} 0 \vert \{z \longleftrightarrow \partial B_{\varepsilon/s}(z)\} \cap \{0 \stackrel{\mathbb{H}}{\longleftrightarrow} \partial B_{\varepsilon/s}(0)\} \big) \\
& \qquad \quad = s^{-7/16} \rho_{\mathbb{H}}(z),
\end{split}
\end{align}
which proves the covariance of $\rho_{\mathbb{H}}$, and therefore of $f$ (by \eqref{eq:equivalence2}), under scale transformations and is consistent with \eqref{eq:KSZ}.

Given the covariance properties of $f$, standard considerations (see, e.g., Section 4.3.1 of \cite{DFMS} or the proof of Theorem 4.5 of \cite{CGK16}) %show that \eqref{eq:conf-cov-f}
imply that 
\begin{align} \label{eq:functional-form}
f(0,z,\bar{z}) = \frac{C}{\vert z \vert^{a} \vert \bar{z} \vert^{b} \vert z-\bar{z} \vert^{c}},
\end{align}
for some constant $C \in (0,\infty)$ and with $a = b = 5/48 + 2/3 - 5/48 = 2/3$ and $c = 5/48 + 5/48 - 2/3 = -11/24$.
Therefore, combining \eqref{eq:functional-form} and \eqref{eq:equivalence2}, we have that
\begin{align}
\rho_{\mathbb{H}}(z) = \sqrt{f(0,z,\bar{z})} = \sqrt{C} \vert z-\bar{z} \vert^{11/48} \vert z \vert^{-1/3} \vert \bar{z} \vert^{-1/3}.
\end{align}
%
%It follows from \cite{CN06,CN08} that the collection of percolation interfaces converges to CLE$_6$, as $a \to 0$. 
%One can now combine the invariance under M\"obius transformations of CLE$_6$ \cite{CN06,CN08,GMQ21}, the scale covariance expressed by \eqref{eq:scale-invariance}, the arguments used in the proof of Theorem 1.1 of \cite{Camia23}, and the arguments in the proof of Theorem 4.5 of \cite{CGK16} to conclude that there is a positive constant, $C_{\mathbb{H}}<\infty$, such that
%Therefore, the translation and rotation invariance of CLE$_6$ \cite{CN06,CN08} implies that $\big(\rho_{\mathbb{H}}(z)\big)^2$ is invariant under translations and rotations. Moreover, \eqref{eq:scale-invariance} implies that $\big(\rho_{\mathbb{H}}(z)\big)^2$ transforms covariantly under scale transformations. By standard arguments (see, for example, \cite{DFMS} or the proof of Theorem 4.5 of \cite{CGK16}), this implies that there is a positive constant $C<\infty$ such that
%\begin{align}
%\big(\rho_{\mathbb{H}}(z)\big)^2 = C_{\mathbb{H}} \Big\vert \vert z-\bar{z} \vert^{-5/48+1/3} \vert z \bar{z} \vert^{-1/3} \Big\vert^{2}.
%\end{align}
%
Writing $z=x+iy=r e^{i \theta}$, this gives
\begin{align}
\rho_{\mathbb{H}}(z) = C_{\mathbb{H}} \, y^{11/48} \, \vert z \vert^{-2/3} = C_{\mathbb{H}} \, r^{-7/16} \, (\sin\theta)^{11/48},
\end{align}
where $C_{\mathbb{H}}= 2^{11/48} \sqrt{C}$.
\end{proof}

We conclude this paper with an observation. The first part of the proof of Theorem \ref{thm:KSZ-formula} (up to equation \eqref{eq:scale-invariance}) applies to more general ``bulk-boundary'' connection probabilities, involving multiple points in the upper half-plane and on the real line. In general, there is no known explicit formula, but one can still prove the existence of the limit and its scale covariance.
\begin{theorem} Consider critical site percolation on the triangular lattice. Given $z_1,\ldots,z_k$ in the upper half-plane and $x_1,\ldots,x_n \in \mathbb{R}$, for any sequences of vertices $z_1^a, \ldots, z_k^a$ and $x_1^a \ldots, x^a_n$ of $a\mathbb{T}$ in the discrete upper half-plane with $z^a_1 \to z_1, \ldots, z^a_k \to z_k$ and $x^a_1 \to x_1, \ldots, x^a_n \to x_n$ as $a \to 0$, the limit
\begin{align}
\lim_{a \to 0} \pi_a^{-k} \overline{\pi}_a^{-n} \mathbb{P}^a\big(z^a_i \longleftrightarrow x^a_j \; \forall i,j \big) =: \rho_{\mathbb{H}}(z_1, \ldots, z_k;x_1, \ldots, x_n)
\end{align}
exists in $(0,\infty)$. Moreover, for any $s>0$,
\begin{align} \label{eq:scale-covariance}
\rho_{\mathbb{H}}(sz_1, \ldots, sz_k;sx_1,\ldots,sx_n) = s^{-5k/48-n/3} \rho_{\mathbb{H}}(z_1, \ldots, z_k;x_1,\ldots,x_n).
\end{align}
\end{theorem}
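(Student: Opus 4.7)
The plan is to mirror the first part of the proof of Theorem \ref{thm:KSZ-formula} (up to equation \eqref{eq:scale-invariance}), replacing the single bulk point $z$ and single boundary point $0$ by the tuples $(z_1,\ldots,z_k)$ and $(x_1,\ldots,x_n)$. The author already hints that this extension works verbatim in the paragraph preceding the statement, so the task is mainly to check that each step generalizes cleanly, with no appeal to the method-of-images/three-point-function argument that produced the closed-form expression \eqref{eq:KSZ}.

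First, standard RSW bounds (as used in the proofs of Theorems \ref{thm:gasket-density} and \ref{thm:KSZ-formula}) yield constants $0<K_1<K_2<\infty$, independent of $a$, with
\[
K_1 \pi_a^k \overline{\pi}_a^n \le \mathbb{P}^a\big(z^a_i \stackrel{\mathbb{H}}{\longleftrightarrow} x^a_j \; \forall i,j\big) \le K_2 \pi_a^k \overline{\pi}_a^n,
\]
so the renormalized probability stays bounded away from $0$ and $\infty$. Next, fix $\varepsilon>0$ smaller than a quarter of the minimum pairwise distance among the $k+n$ marked points and smaller than half the distance of each $z_i$ from the real line. For $a$ small enough, the disks $B_\varepsilon(z_i^a)$ and the upper half-disks $B_\varepsilon(x_j^a)\cap\mathbb{H}$ are pairwise disjoint, so the $k+n$ one-arm events $\{z_i^a\longleftrightarrow \partial B_\varepsilon(z_i^a)\}$ and $\{x_j^a\stackrel{\mathbb{H}}{\longleftrightarrow} \partial B_\varepsilon(x_j^a)\}$ are mutually independent. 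The full connection event is contained in their intersection $\mathcal{E}^a_\varepsilon$, so conditioning on $\mathcal{E}^a_\varepsilon$ and using \eqref{eq:scaling-bulk} and \eqref{eq:scaling-boundary} yields
\[
\lim_{a\to 0}\pi_a^{-k}\overline{\pi}_a^{-n}\mathbb{P}^a\big(z^a_i \stackrel{\mathbb{H}}{\longleftrightarrow} x^a_j \; \forall i,j\big) = \varepsilon^{-5k/48 - n/3}\lim_{a\to 0} \mathbb{P}^a\big(z^a_i \stackrel{\mathbb{H}}{\longleftrightarrow} x^a_j \; \forall i,j\, \big\vert\, \mathcal{E}^a_\varepsilon\big).
\]
The conditional probability on the right is of exactly the type treated in the proof of Theorem 1.1 of \cite{Camia23} (a multi-point connectivity event conditioned on a collection of disjoint one-arm events centered at the marked points), so the same convergence argument yields a limit in $(0,\infty)$, which I take as the definition of $\rho_{\mathbb{H}}(z_1,\ldots,z_k;x_1,\ldots,x_n)$.

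For the scale covariance \eqref{eq:scale-covariance}, I would follow \eqref{eq:scale-invariance}: given $s>0$, choose $\varepsilon$ small enough that both the unscaled and the scaled configurations satisfy the above separation condition. Applying the representation just derived at the scaled points $(sz_i,sx_j)$ with conditioning radius $\varepsilon$, and then invoking the scale invariance of the scaling limit of critical percolation \cite{CN06,CN08} to replace $\varepsilon$ by $\varepsilon/s$ in the conditional probability at the unscaled points, the factor $\varepsilon^{-5k/48-n/3}$ splits as $s^{-5k/48-n/3}(\varepsilon/s)^{-5k/48-n/3}$; the second factor combines with the conditional probability to reproduce $\rho_{\mathbb{H}}(z_1,\ldots,z_k;x_1,\ldots,x_n)$, giving \eqref{eq:scale-covariance}.

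The main (and essentially only) obstacle lies in the appeal to \cite{Camia23}: one must verify that the convergence machinery developed there for connectivity events conditioned on bulk one-arm events extends to this mixed bulk-and-boundary setting with an arbitrary finite number of conditioning points. Since \cite{Camia23} already treats multi-point bulk connection events and the boundary one-arm scaling \eqref{eq:scaling-boundary} shares the quasi-multiplicativity structure of its bulk counterpart, this extension is essentially bookkeeping, but spelling it out carefully at multiple boundary points simultaneously is the only place where real work is required.
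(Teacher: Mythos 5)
Your proposal follows exactly the route the paper takes: RSW bounds for tightness, conditioning on $k+n$ mutually independent one-arm events (bulk at the $z_i$, boundary at the $x_j$) to invoke \eqref{eq:scaling-bulk}--\eqref{eq:scaling-boundary}, the convergence machinery of Theorem 1.1 of \cite{Camia23} for the conditional probability, and the argument of \eqref{eq:scale-invariance} for scale covariance. The paper itself gives only an ``idea of the proof'' at this level of detail, and your write-up fills it in correctly, including the honest caveat that the real work lies in extending the arguments of \cite{Camia23} to the mixed bulk--boundary setting.
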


\begin{proof}[Idea of the proof.] The existence of the limit follows from the same arguments used in the proof of Theorem 1.1 of \cite{Camia23} and of the limit in \eqref{eq:conditional-prob}, with the difference that now one needs to condition on $k+n$ one-arm events in the upper half-plane, at $z^a_1, \dots, z^a_k$ and $x^a_1, \ldots, x^a_n$. The scale covariance is proved in the same way as \eqref{eq:scale-invariance}, with the exponent coming from the exponents in \eqref{eq:scaling-bulk} and \eqref{eq:scaling-boundary}.
\end{proof}

\medskip

\noindent {\bf Acknowledgments:} The author thanks Peter Kleban and Robert Ziff for an interesting correspondence and for comments on a draft of the paper.

\medskip

%\noindent {\bf Conflict of interest statement:} The author states that there is no conflict of interest.

%\medskip

%\noindent {\bf Data availability statement:} No datasets were generated or analyzed during the current study.

%\medskip

\end{document}